\documentclass[12pt]{article}
\usepackage{times}

\usepackage{amsthm,amsfonts,amssymb,amsmath}

\usepackage[T1]{fontenc}
\usepackage[cp1250]{inputenc}
\newcommand{\be}{\begin{equation}}
\newcommand{\ee}{\end{equation}}

\newcommand{\ba}{\begin{eqnarray}}
\newcommand{\ea}{\end{eqnarray}}

\usepackage{cite}
\usepackage{color}
\newtheorem{thm}{Theorem}[section]

\newtheorem{prop}[thm]{Proposition}

\newcommand{\p}{\partial}

\hyphenation{trans-verse-trace-less}

\begin{document}
\centerline{\textbf{Decoupling the momentum constraints in general relativity}}

\date{\today}

\null

\centerline{ J Tafel} 
\noindent
\centerline{Institute of Theoretical Physics, University of Warsaw}
\centerline{ Ho\.{z}a 69, 00-681 Warsaw, Poland, tafel@fuw.edu.pl}

\begin{abstract}
We present a 2+1 decomposition of the vacuum initial conditions in general relativity.  For a constant mean curvature one of the momentum constraints decouples in quasi isotropic coordinates and it can be solved by quadrature.  The remaining  momentum constraints are written in the form of the tangential Cauchy-Riemann equation. Under additional assumptions its solutions  can be  written  in terms of integrals of known functions.   We show how to obtain  initial data with a marginally outer trapped  surface. A generalization of the Kerr data is presented.
\end{abstract}

\null

\noindent
 Keywords:
initial constraints,  conformal method, marginally trapped surfaces

\null

\noindent
PACS numbers: 04.20.Ex, 04.20.Ha, 04.70.Bw

\section{Introduction}

Vacuum initial data in general relativity  consist of  a Riemannian metric $\tilde g =\tilde g_{ij}dx^idx^j$ and a symmetric tensor $\tilde K=\tilde K_{ij}dx^idx^j$  given on a 3-dimensional manifold $S$. These data have to satisfy the constraint equations 
\begin{align}
\tilde\nabla_j\big(\tilde K^j_{\ i}-\tilde H\delta^j_{\ i}\big)&=0 \label{1}\\
  \tilde R+\tilde H^2-\tilde K_{ij}\tilde K^{ij}&=0,\label{2}
\end{align} 
where  $\tilde\nabla_i$ are covariant derivatives corresponding to $\tilde g$, $\tilde R$ is the Ricci scalar of $\tilde g$ and $\tilde H=\tilde K^i_{\ i}$. Tensors $\tilde g$ and  $\tilde K$ are interpreted, respectively, as the induced metric and the external curvature of $S$ embedded in a 4-dimensional spacetime developing from these data in accordance with the Einstein equations.

The conformal approach to the constraints  of Lichnerowicz, Choquet-Bruhat and York    (see \cite{cy,c,i} for a  review) is based on the following representation of initial data 
\begin{equation}\label{1a}
\tilde g_{ij}=\psi^4 g_{ij},\quad \tilde K_{ij}=\psi^{-2} K_{ij}+\frac 13\tilde H\psi^4 g_{ij}\ ,
\end{equation}
where $H=K^i_{\ i}=0$.
 The momentum constraint (\ref{1}) yields  equations for $g_{ij}$, $K_{ij}$, $\tilde H$ and $\psi$ with no derivatives of $\psi$, whereas the Hamiltonian constraint (\ref{2}) is equivalent to the Lichnerowicz equation
\begin{equation}\label{2a}
\bigtriangleup\psi=\frac 18 R\psi-\frac 18 K_{ij} K^{ij}\psi^{-7}+\frac {1}{12} \tilde H^2\psi^{5} ,
\end{equation}
where  $\bigtriangleup$ and $R$ are, respectively, the covariant Laplace operator and the Ricci scalar of  metric $g$. 

If $\tilde H=const$ (constant mean curvature  data)  the momentum constraint (\ref{1}) is equivalent  to 
\be
\nabla_jK^j_{\ i}=0 \ ,\ \ K^i_{\ i}=0\ .\label{3a}
\ee
 In this case one can first find a  solution $(g,K)$  of (\ref{3a}) and then consider equation  (\ref{2a}) for $\psi$.  Equations (\ref{3a}) were solved analytically or reduced to a simpler system only for  conformally flat initial metrics \cite{m,bl,by,bb} or  symmetric data \cite{bs,bp,d,cm,tj}.   Exact solutions of (\ref{2a}) are known  for  the simplest class of data with $K=0$ and conformally flat  $g$. In other cases, the best what one can do is to prove the existence of solutions   \cite{cb,d1,ma,ma1}  or to find them numerically (see \cite{al,b} for a review).
 
 The aim of this paper is to reduce equations (\ref{3a}) to a simpler system. We show that these equations decouple in  coordinates closely related to quasi isotropic coordinates of Smarr \cite{s}. One of the equations yields a component $W$ of $K$ as an  integral of free data. Then  the remaining two equations can be written as a single complex equation using the  Cauchy-Riemann structure related to the  metric $g$. Under additional assumptions solutions of this equation can be   also represented as  integrals of known functions.

In the last section we consider data admitting  a marginally outer trapped surface (MOTS)  which can be considered as an attribute of a black hole.  Known constructions  of such data are  based on the puncture method of Brill and Lindquist \cite{bl}, the conformal-imaging method of Misner \cite{m} or the boundary condition method proposed by Thornburg \cite{t}.   In the spirit of  Misner's approach  we define  a class of maximal non-conformally flat data with a reflection symmetry which assures existence of  MOTS. These data generalize the Kerr metric data but they don't have to be axially symmetric.

\section{The 2+1 decomposition of initial data}
We would like to find coordinates in which the momentum constraint (\ref{3a})  decouples and can  be  partly integrated. Let us foliate the initial manifold $S$ into surfaces given by constant levels of a function $\varphi$. 
In  coordinates $x^i=x^a,\varphi$, where $i=1,2,3$ and $a=1,2$,  an arbitrary   metric $g$ on $S$ can be written in the form
\begin{equation}
g=g_{ab}dx^adx^b+\alpha^2 (d\varphi+\beta_a dx^a)^2\ .\label{24}
\end{equation}
Coordinate transformations allow to impose up to 3 conditions on components of this metric. Most of them, including conditions satisfied in the Gauss coordinates, do not  simplify equation (\ref{3a}). 
The method of trial and error shows that probably the best coordinates in this respect  are  $x^a,\varphi$ such that the 2-dimensional  metric $g_{ab}$ is conformally flat
\be
g=\rho^2\delta_{ab}dx^adx^b+\alpha^2 (d\varphi+\beta_a dx^a)^2\ .\label{30}
\ee
In order to find these coordinates  one has  to solve  equation
\begin{equation}
 \xi_{,i}\xi^{,i}=0\label{30g}
\end{equation}
for  a complex function  $\xi$ 
such that $d\xi\wedge d\bar\xi\wedge d\varphi\neq 0$. Then $Re\xi$, $Im\xi$  and $\varphi$ are new coordinates in which metric takes the form  (\ref{30}).  Still  one can change coordinate  $\varphi$. This freedom allows to    reduce a number of functions in $g$ to three. For instance,  in this way one can obtain conditions satisfied by the quasi isotropic coordinates \cite{s}.

We are not able to describe all initial metrics which admit solutions of (\ref{30g}). 
Throughout the paper we  will  assume  existence of coordinates $x^a,\varphi$ such that $g$ is given by (\ref{30}). It is not very strong restriction compared  to the assumption $\tilde H=const$. We will work in a single coordinate system. Given a solution of (\ref{3a})  one can try to extend it to an acceptable initial manifold.  Coordinates $x^i$ may be analogs of the Cartesian coordinates in flat space but they may be also related  to  spherical-like coordinates with a radial distance defined by $x^1$ and angles $x^2=\theta$ and $\varphi$. In the latter case  initial data should be periodic in $\varphi$ and satisfy appropriate conditions at $\theta=0,\pi$ (see section 3).

Let us choose the following basis of 1-forms and the dual vector basis 
\be\label{25}
\theta^a=dx^a\ ,\ \ \theta^3=d\varphi+\beta_a dx^a
\ee
\be\label{26}
e_a=\p_a-\beta_a\p_{\varphi}\ ,\ \ e_3=\p_{\varphi}
\ee
adapted to  metric (\ref{30}). It is convenient to define a complex coordinate $\xi$, a complex operator $\p$ and  functions $\beta$, $U$, $V$, $W$  as follows
\be\label{35}
\xi=x^1+ix^2\ ,\ \ \ \beta=\frac 12(\beta_1-i\beta_2)\ ,\ \ \partial=\p_{\xi}-\beta\p_{\varphi}
\ee
\be\label{36}
U=\frac {1}{2}\alpha (K_{11}-K_{22})-i\alpha K_{12}\ ,\ \ \ V=\alpha(K_{13}-iK_{23})\ ,\ \ W=K^{\ a}_a\ .
\ee
In coordinates $\xi,\bar\xi,\varphi$    metric  $g$ reads
\begin{equation}
g=\rho^2d\xi d\bar\xi+\alpha^2 (d\varphi+\beta d\xi+\bar\beta d\bar\xi)^2\ .\label{30f}
\end{equation}

\begin{prop}
 The momentum constraint (\ref{3a}) with  metric (\ref{30f}) decouples  into the following  system of equations for a real function $W$ and a complex function $U$
\ba\label{39b}
(\rho^{3}W)_{,\varphi}=E
\ea
\be
U_{,\bar\xi}-(\bar\beta U)_{,\varphi}=F\ ,\label{37}
\ee
where  $E$ is given by (\ref{39c}) and $F$ is given  up to $W$ by (\ref{39a}). Free data consist of metric $g$, complex function $V$, a real $\varphi$-independent function $W_0$ and a complex function $U_0$ satisfying 
\be
U_{0,\bar\xi}-(\bar\beta U_0)_{,\varphi}=0\ .\label{37a}
\ee
\end{prop}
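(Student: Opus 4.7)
The plan is to expand the divergence $\nabla_j K^j_{\ i}=0$ componentwise in the coordinates $(\xi,\bar\xi,\varphi)$ adapted to (\ref{30f}) and then to reproject the three resulting scalar equations along the natural $2+1$ split provided by the coframe (\ref{25}) and its dual (\ref{26}). With $\sqrt{|g|}=\rho^2\alpha$, the identity
\begin{equation*}
\nabla_j K^j_{\ i}=\frac{1}{\sqrt{|g|}}\partial_j\bigl(\sqrt{|g|}\,K^j_{\ i}\bigr)-\Gamma^k_{ji}K^j_{\ k}
\end{equation*}
reduces everything to computing the Christoffel symbols of (\ref{30f}) and to expressing $K^j_{\ i}$ in terms of the coordinate components $K_{ij}$ through $g^{ij}$, whose off-diagonal pieces $g^{a\varphi}=-\rho^{-2}\beta^a$ are precisely what will produce the shift $\bar\beta\,\partial_\varphi$ that appears inside the tangential Cauchy--Riemann operator.

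For the $i=\varphi$ component I would invoke the tracelessness $K^\varphi_{\ \varphi}=-W$, so that the only $\partial_\varphi$-derivative of $K$ surviving in the equation is $\partial_\varphi W$, and then sort the remaining terms. After a cancellation between the explicit volume factor $\rho^2\alpha$ and the $\Gamma^k_{j\varphi}K^j_{\ k}$ contributions, the equation should assemble into the clean form $(\rho^3 W)_{,\varphi}=E$, with $E$ built only from the metric, $V$, and tangential (non-$\partial_\varphi$) derivatives of these quantities. This structure says that $W$ is obtained by a quadrature in $\varphi$ once $g$ and $V$ are chosen, the integration constant being the free $\varphi$-independent function $W_0(x^a)$.

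For the tangential components $i=a$ I would combine the two real equations into the single complex combination $\tfrac12(\mathrm{eq}_1-i\,\mathrm{eq}_2)$ and rewrite the traceless-symmetric pieces of $K_{ab}$ through $U$ using (\ref{36}). The complex derivative $\partial_{\bar\xi}=\tfrac12(\partial_1+i\partial_2)$ then emerges from the coordinate derivatives of the $\alpha K$-entries, while the shift terms — coming both from the frame offset $-\beta_a\partial_\varphi$ in (\ref{26}) and from the off-diagonal $g^{a\varphi}$ — should reorganise into $-(\bar\beta U)_{,\varphi}$. Everything involving $V$, the already-determined $W$, and the metric is absorbed into $F$, which by construction does not depend on $U$, so (\ref{37}) is a genuine linear equation for $U$ alone.

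The main obstacle is the purely algebraic bookkeeping: computing the Christoffel symbols of (\ref{30f}), handling the mixed-index $K^j_{\ i}$ through an inverse metric with nontrivial $g^{a\varphi}$, and verifying that the $\rho^3$ prefactor of (\ref{39b}) and the precise shifted operator of (\ref{37}) actually emerge after all cancellations. Once these two equations are established, the inventory of free data is immediate: $g$ and $V$ can be prescribed freely, $W_0$ appears as the $\varphi$-independent integration constant for (\ref{39b}), and since the left-hand side of (\ref{37}) is linear in $U$, its general solution is any particular solution plus an arbitrary element of its kernel, namely a $U_0$ satisfying (\ref{37a}).
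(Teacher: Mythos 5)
Your proposal is correct and takes essentially the same route as the paper's proof, which likewise expands the momentum constraint componentwise in the adapted frame (\ref{26}) (your Christoffel-symbol computation with $\sqrt{|g|}=\rho^2\alpha$ is just the coordinate version of this), splits $K_a^{\ b}$ into its trace and traceless parts, complexifies the two tangential equations as $\tfrac12(F_1-iF_2)$, and obtains $W_0$ and $U_0$ as the integration freedom exactly as you describe. One minor slip: $E$ is not built from purely tangential (non-$\partial_\varphi$) derivatives --- by (\ref{39c}) it equals $2\rho\alpha^{-1}Re(\bar\partial V-2\bar\beta_{,\varphi}V)$ with $\bar\partial=\partial_{\bar\xi}-\bar\beta\,\partial_\varphi$, so it does contain $\varphi$-derivatives of $V$ and $\beta$ --- but this does not affect your argument, since the decoupling only requires $E$ to be independent of $W$ and $U$, which it is.
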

\begin{proof}
Let $\hat K^{\ b}_a$ be the traceless part of $K^{\ b}_a$
\be
K^{\ b}_a=\hat K^{\ b}_a+\frac 12W\delta^{\ b}_a\ .\label{31}
\ee
In basis (\ref{26}) the momentum constraint (\ref{3a}) with $i=3$ yields (\ref{39b})
with
\ba\label{32a}
E=\rho\alpha^{-1}e_a(\alpha\rho^2 K_3^{\ a})-2\rho^3\beta_{a,\varphi}K_3^{\ a}\ .
\ea
For $i=a$ one obtains
\ba\label{33}
\rho^2(\alpha \hat K^{\ b}_a)_{|b}-(\alpha\rho^2\hat K^{\ b}_a\beta_b)_{,\varphi}=F_a\ ,
\ea
where ${}_{|b}$ denotes the covariant derivative with respect to metric $g_{ab}=\rho^2\delta_{ab}$ and
\ba\label{34}
F_a=-\frac 12\rho^2\alpha^{-2}e_a (\alpha^3W)+\frac 32\rho^2\alpha W\beta_{a,\varphi}+\rho^2\alpha(\eta^{cd}e_c\beta_d)\eta_{ab}K_3^{\ b}
-(\alpha^{-1}\rho^2K_{3a})_{,\varphi}.
\ea

By means of  (\ref{35}) and (\ref{36}) formula (\ref{32a}) can be written as
\ba\label{39c}
E=2\rho \alpha^{-1}Re(\bar\p V-2\bar\beta_{,\varphi}V)\ .
\ea
A complex combination of equation (\ref{33}) with $a=1$ and $a=2$ leads to (\ref{37})  with $F=\frac 12(F_1-iF_2)$ of the form
\ba
F=-\frac 12\rho^2\alpha^{-2}\p (\alpha^3 W)+\frac 32\rho^2\alpha W\beta_{,\varphi}-2i Im(\bar\p\beta)V
-\frac 12(\alpha^{-2}\rho^2V)_{,\varphi}.\label{39a}
\ea

 Equation (\ref{39b}) determines $W$ up to a real function $f(\xi,\bar\xi)$ 
\ba\label{38a}
W=\rho^{-3}(\int_{\varphi_0}^{\varphi}{Ed\varphi'}+f(\xi,\bar\xi))\ .
\ea 
Function $f$ is in one to one  correspondence to function  $W_0=W|_{\varphi=\varphi_0}$. 
Substituting (\ref{38a}) into (\ref{39a}) yields $F$ up to $f$. Then (\ref{37}) becomes an equation for $U$. Its solution, if it exists, is defined up to a solution $U_0$ of the homogenoeus  part of (\ref{37}).  

\end{proof}
Note that equations (\ref{39b})-(\ref{37}) are invariant under transformation (\ref{1a}) with $\tilde H=0$ since (\ref{3a}) is invariant. Using this freedom  one can fix one of components of $g$. Another component  can be fixed by transformation of coordinate $\varphi$.

Substituting $F=\hat F_{,\varphi}$ into (\ref{37}) leads to the equation
\be\label{38}
\bar\p \hat U=\hat F
\ee
for a function  $\hat U$ such that $U=\hat U_{,\varphi}$. Operator $\p$ defines the Cauchy-Riemann (CR) structure  (see \cite{bo} and references therein) on the initial manifold, not unique since locally there are many systems of coordinates in which metric takes the form (\ref{30}).
 The example of Hans Lewy \cite{l} shows that equation (\ref{38}) can be unsolvable for some functions $\beta$ and $\hat F$. CR structures are known to  appear in general relativity, especially in the context of algebraically special solutions of the Einstein equations (see e.g. \cite{lnt}).

There are several cases in which solutions of (\ref{38}) can be represented in an integral form. In all of them 
the CR structure is realizable. This means that, in addition to $\xi$, there exists a solution $\chi$ of equation $\bar\p\chi=0$ such that $\chi_{,\varphi}\neq 0$.
  Equivalently,  $\beta$ can be written  in the  form
\be
\beta=\frac{\bar\chi_{,\xi}}{\bar\chi_{,\varphi}}\ .\label{30a}
\ee
Given $\chi$ the initial manifold can be considered as a 3-dimensional real surface in space $C^2$ of pairs $(\xi,\chi)$. 

\begin{prop}
 If $\beta=0$ then $U$ is given by
 \be\label{45}
U=\frac{1}{2\pi i}\int_{\Omega}{\frac{F(\xi',\bar\xi',\varphi)}{\xi'-\xi}d\xi'\wedge d\bar\xi'}+h(\xi,\varphi)\ ,
\ee
where the integral is taken over a bounded open neighborhood ${\Omega}$ of $\xi$ in $C$ and $h$ is an arbitrary function holomorphic in $\xi$.
 \end{prop}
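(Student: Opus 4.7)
The plan is as follows. With $\beta = 0$ the operator $\partial = \partial_\xi - \beta \partial_\varphi$ reduces to $\partial_\xi$, and its conjugate to $\partial_{\bar\xi}$. The bracketed term $(\bar\beta U)_{,\varphi}$ in (\ref{37}) vanishes identically, so the equation to be solved collapses to the inhomogeneous Cauchy--Riemann equation
\begin{equation*}
U_{,\bar\xi}(\xi,\bar\xi,\varphi) = F(\xi,\bar\xi,\varphi),
\end{equation*}
in which the coordinate $\varphi$ enters only as an inert parameter. Thus the three-dimensional PDE decomposes into a one-parameter family of $\bar\partial$-problems on a two-dimensional slice.

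Next I would invoke the classical Cauchy--Pompeiu (Cauchy--Green) representation, which is based on the distributional identity
\begin{equation*}
\partial_{\bar\xi}\!\left(\frac{1}{2\pi i\,(\xi'-\xi)}\right) = \delta^{(2)}(\xi - \xi'),
\end{equation*}
where $\delta^{(2)}$ is the Dirac measure with respect to the area form $\frac{i}{2}d\xi\wedge d\bar\xi$. Convolving $F(\cdot,\cdot,\varphi)$ with this fundamental solution over the bounded neighbourhood $\Omega$ of $\xi$ produces the first term on the right hand side of (\ref{45}); one verifies by direct differentiation under the integral sign (using the distributional identity above, or equivalently Stokes' theorem on $\Omega$ minus a small disc about $\xi$ followed by a limit) that this particular solution does indeed satisfy $U_{,\bar\xi} = F$ pointwise inside $\Omega$.

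For uniqueness, any two solutions of $U_{,\bar\xi}=F$ differ by a function annihilated by $\partial_{\bar\xi}$, i.e.\ a function holomorphic in $\xi$; since $\varphi$ was merely a parameter, this function may depend arbitrarily on $\varphi$, which produces the term $h(\xi,\varphi)$ in (\ref{45}). The only real subtlety is ensuring enough regularity of $F$ on $\bar\Omega$ (e.g.\ $F\in L^p(\Omega)$ for some $p>2$, or $F$ H\"older continuous) so that the singular integral converges and differentiation under the integral is justified; this is automatic under the smoothness assumptions tacitly made on the free data $g$, $V$, $W_0$, $U_0$, and so I do not expect it to pose a genuine obstacle. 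The argument is essentially a parametrised application of a standard result from complex analysis, and no further input from the structure of the constraint equations is required once $\beta=0$ has trivialised the CR operator.
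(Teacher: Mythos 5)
Your proof is correct and follows essentially the same route as the paper: for $\beta=0$ equation (\ref{37}) collapses to $U_{,\bar\xi}=F$ with $\varphi$ as an inert parameter, and the solution is the Cauchy--Pompeiu representation built from the fundamental solution of $\p_{\bar\xi}$ (the paper cites exactly this, Theorem 1.2.1 of H\"ormander \cite{h}), the holomorphic ambiguity giving $h(\xi,\varphi)$. One bookkeeping remark: with your conventions the correct identity is $\p_{\bar\xi}\bigl(\tfrac{1}{2\pi i(\xi'-\xi)}\bigr)=\tfrac{i}{2}\,\delta^{(2)}(\xi-\xi')$ relative to $dx\,dy$, the factor $\tfrac{i}{2}$ being absorbed by $d\xi'\wedge d\bar\xi'=-2i\,dx'\,dy'$ in the convolution, so the verification that (\ref{45}) solves the equation goes through unchanged.
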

\begin{proof}
For $\beta=0$ equation (\ref{37}) reads
\be
U_{,\bar\xi}=F\ .\label{37x}
\ee
Integrating (\ref{37x}) with the fundamental solution $(\pi\xi)^{-1}$ for the Cauchy-Riemann operator $\p_{\bar\xi}$ leads to a version of Cauchy's integral formula  (Theorem 1.2.1 in \cite{h}) which can be converted into (\ref{45}). The domain of integration can be extended to a whole support of $F$ provided that  the r. h. s. of (\ref{45}) still  makes sense.

\end{proof}
\begin{prop}
 Let functions $\rho$, $\alpha$, $\chi$ and $V$ be analytic with respect to coordinate $\varphi$. Then
 \be\label{45c}
U=\chi_{,\varphi}\big( \frac{1}{2\pi i}\int_{\Omega}{\frac{F(\xi',\bar\xi',\varphi(\xi',\bar\xi',\chi))}{\xi'-\xi}\varphi_{,\chi}(\xi',\bar\xi',\chi)d\xi'\wedge d\bar\xi'}+h(\xi,\chi)\big)\ ,
\ee
where $\chi$  depends on unprimed coordinates and $\varphi (\xi',\bar\xi',\cdot)$ is the inverse function to $\chi(\xi',\bar\xi',\cdot)$.
\end{prop}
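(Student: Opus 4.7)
The plan is to exploit realizability of the CR structure to change coordinates from $(\xi,\bar\xi,\varphi)$ to $(\xi,\bar\xi,\chi)$, so that the operator $\bar\p$ becomes the ordinary $\p_{\bar\xi}$ at fixed $\chi$, and then to apply the preceding proposition (the $\beta=0$ case) slice by slice in $\chi$. The $\varphi$-analyticity of $\rho$, $\alpha$, $\chi$, $V$ together with $\chi_{,\varphi}\neq 0$ lets me invert $\chi=\chi(\xi,\bar\xi,\varphi)$ to an analytic $\varphi=\varphi(\xi,\bar\xi,\chi)$ via the analytic inverse function theorem.

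First I would check that the CR operator reduces to a flat Cauchy--Riemann operator in the new coordinates. Differentiating $\chi(\xi,\bar\xi,\varphi(\xi,\bar\xi,\chi))=\chi$ with respect to $\bar\xi$ at fixed $\chi$ and using (\ref{30a}) gives $\varphi_{,\bar\xi}|_\chi=-\bar\beta$. Hence for any function $f$ viewed in the new coordinates,
\[
\left.f_{,\bar\xi}\right|_\chi = f_{,\bar\xi}|_\varphi + f_{,\varphi}\,\varphi_{,\bar\xi}|_\chi = f_{,\bar\xi}-\bar\beta f_{,\varphi} = \bar\p f.
\]

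Next I would substitute $U=\chi_{,\varphi}\hat U$ and treat $\hat U$ as a function of $(\xi,\bar\xi,\chi)$. A short computation, obtained by applying $\p_\varphi$ to $\bar\p\chi=0$, yields the auxiliary identity $\bar\p\chi_{,\varphi}=\bar\beta_{,\varphi}\chi_{,\varphi}$. Combined with the product rule and the previous step this gives
\[
\bar\p U-\bar\beta_{,\varphi}U = \chi_{,\varphi}\,\hat U_{,\bar\xi}|_\chi ,
\]
while the left-hand side of (\ref{37}) equals $\bar\p U-\bar\beta_{,\varphi}U$. After dividing by $\chi_{,\varphi}$ and using $1/\chi_{,\varphi}=\varphi_{,\chi}$, equation (\ref{37}) becomes the flat Cauchy--Riemann equation $\hat U_{,\bar\xi}|_\chi = F\,\varphi_{,\chi}$ in the single complex variable $\xi$, with $\chi$ acting as a parameter.

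Finally, I would apply the preceding proposition at each fixed $\chi$ to write $\hat U$ as the stated Cauchy-type integral in $\xi$ plus an arbitrary $\xi$-holomorphic term $h(\xi,\chi)$; multiplying by $\chi_{,\varphi}$ and unwinding the coordinates reproduces (\ref{45c}), where the prefactor $\chi_{,\varphi}$ is evaluated at the unprimed coordinates while $\varphi_{,\chi}$ and $F$ inside the integrand are evaluated at $(\xi',\bar\xi',\chi)$. I expect the main subtlety to be bookkeeping: keeping careful track of whether each function is expressed in $(\xi,\bar\xi,\varphi)$ or in $(\xi,\bar\xi,\chi)$ coordinates, and verifying that the coordinate inversion and the identity $\bar\p\chi_{,\varphi}=\bar\beta_{,\varphi}\chi_{,\varphi}$ persist throughout the integration domain $\Omega$---this is precisely where the analyticity hypothesis is used.
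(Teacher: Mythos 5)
Your proof is correct, and it follows the paper's overall architecture (realizability of the CR structure, passage to coordinates $(\xi,\bar\xi,\chi)$, reduction to a flat Cauchy--Riemann equation in $\xi$ with $\chi$ as a parameter, then the Cauchy transform of Proposition 2.2), but it differs in the key algebraic step, and the difference is worth recording. The paper first substitutes $F=\hat F_{,\varphi}$ and $U=\hat U_{,\varphi}$ to pass to the potential equation (\ref{38}), $\bar\p\hat U=\hat F$, solves it in the new coordinates by the Cauchy integral (\ref{45d}), and only then recovers (\ref{45c}) by differentiating under the integral sign with respect to $\varphi$: the factor $\varphi_{,\chi}$ in the integrand arises there from the chain rule applied to $\hat F(\xi',\bar\xi',\varphi(\xi',\bar\xi',\chi))$, and $h=\hat h_{,\chi}$. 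You instead use the multiplicative substitution $U=\chi_{,\varphi}\hat U$ together with the integrating-factor identity $\bar\p\chi_{,\varphi}=\bar\beta_{,\varphi}\chi_{,\varphi}$ (obtained by differentiating $\bar\p\chi=0$ in $\varphi$), which conjugates the operator $\bar\p-\bar\beta_{,\varphi}$ appearing on the left of (\ref{37}) directly into $\chi_{,\varphi}\,\p_{\bar\xi}\big|_{\chi}$, so the right-hand side $F\varphi_{,\chi}$ emerges at the level of the equation itself; no antiderivative $\hat F$ and no interchange of $\p_\varphi$ with the integral are needed. Effectively your $\hat U$ is the paper's $\hat U_{,\chi}$, so the two routes yield the same formula: yours is the more direct reduction, bypassing the choice of a $\varphi$-primitive, while the paper's route reuses (\ref{38}), the unifying tangential CR equation of section 2 that also underlies (\ref{41}). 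One point you should make explicit, as the paper does with its phrase ``complexify $\varphi$'': since $\chi$ is complex-valued, its level sets in the real three-dimensional initial manifold are curves, not surfaces, so fixing $\chi$ and applying Proposition 2.2 slice by slice only makes sense after analytically continuing all fields to complex $\varphi$; this is precisely what the analyticity hypothesis provides, and it is also what legitimizes your analytic inversion $\varphi=\varphi(\xi,\bar\xi,\chi)$ and the extension of the identities $\varphi_{,\bar\xi}\big|_{\chi}=-\bar\beta$ and $\bar\p\chi_{,\varphi}=\bar\beta_{,\varphi}\chi_{,\varphi}$ to the complexified domain, the ``bookkeeping'' subtlety you correctly flagged at the end.
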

\begin{proof}
Under assumptions of this proposition functions $W$, $F$ and $\hat F$ are also analytic in $\varphi$. Thus, we can complexify $\varphi$ and pass to coordinates $\xi$, $\bar\xi$ and $\chi$. In these coordinates equation (\ref{38}) reads
\be\label{38b}
 \hat U_{,\bar\xi}=\hat F(\xi,\bar\xi,\varphi(\xi,\bar\xi,\chi))\ .
\ee
Now, we can follow  (\ref{45}) in order to represent $\hat U$ as 
\be\label{45d}
\hat U=\frac{1}{2\pi i}\int_{\Omega}{\frac{\hat F(\xi',\bar\xi',\varphi(\xi',\bar\xi',\chi))}{\xi'-\xi}d\xi'\wedge d\bar\xi'}+\hat h(\xi,\chi)\ .
\ee
If we come back to coordinates $\xi$, $\bar\xi$, $\varphi$ and differentiate (\ref{45d}) over $\varphi$ we obtain (\ref{45c}) with $h=\hat h_{,\chi}$.

\end{proof}
Formula (\ref{45c}) becomes much simpler if $\beta_{,\varphi}=0$. 
Then we can choose
\be
\chi=\varphi+\chi^0(\xi,\bar\xi)\ ,\label{41a}
\ee
where $\chi^0_{,\bar\xi}=\bar\beta$ (this condition can be always locally solved with respect to $\chi_0$). Expression (\ref{45c}) takes the form
\be\label{45f}
U=\frac{1}{2\pi i}\int_{\Omega}{\frac{F(\xi',\bar\xi',\varphi+\chi^0(\xi,\bar\xi)-\chi^0(\xi',\bar\xi'))}{\xi'-\xi}d\xi'\wedge d\bar\xi'}+h(\xi,\varphi+\chi^0(\xi,\bar\xi))\ .
\ee

If all initial data are independent of  $\varphi$ formula  (\ref{45f})  reduces to
\be\label{45a}
U=\frac{1}{2\pi i}\int_{\Omega}{\frac{F(\xi',\bar\xi')}{\xi'-\xi}d\xi'\wedge d\bar\xi'}+h(\xi)\ .
\ee
In this case equation (\ref{39b}) is replaced by condition $E=0$. It implies 
\begin{equation}
K^{\ a}_3=\alpha^{-1}\eta^{ab}\omega _{,b}\ ,\label{50}
\end{equation}
where $\omega$ is a function of coordinates $x^a$ \cite{tj} (a potential equivalent to $\omega$ was also introduced in \cite{bp,d,cm}). 
Now $F$ is determined by metric $g$ and functions $W$ and $\omega$. 

If $\chi$  and $\hat F$ are analytic in all coordinates then one can solve (\ref{38})   by integrating (\ref{38b}) over $\bar\xi$. Then, instead of (\ref{45c}) one obtains
\be\label{41}
U=\chi_{,\varphi}\big(\int_{\bar\xi_0}^{\bar\xi}{F(\xi,\bar\xi',\varphi(\xi,\bar\xi',\chi)\varphi_{,\chi}(\xi,\bar\xi',\chi)d\bar\xi'}+h(\xi,\chi)\big)\ .
\ee
This formula becomes particularly simple if data is independent of $\varphi$
\be\label{44a}
U=\int_{\bar\xi_0}^{\bar\xi}{F(\xi,\bar\xi')d\bar\xi'}+h(\xi)\ .
\ee

All considerations in this section are purely local. Assume that we can extend local solutions of the momentum constraint (\ref{3a}) to a whole initial manifold, e. g. to an asymptotically Euclidean manifold. In this case  the mean curvature $\tilde H$ must vanish. In order to complete the construction of initial data one should  solve the Lichnerowicz equation (\ref{2a}) for the conformal factor $\psi$.  Existence of a solution   is equivalent to positivity of the Yamabe type invariant \cite{ma,ma1}, but there is no practical device how to satisfy the latter condition. For some classes of data one can get it from the Sobolev inequalities admitted by the initial manifold \cite{tj}. 
The simplest way to assure existence and uniqueness of $\psi$ is to assume that  the initial manifold is complete and $R\geq 0$. For instance, one can  take the same initial  manifold and metric $g$ as in a known maximal ($\tilde H=0$) asymptotically flat solution  of the full set of initial conditions. Then $R\geq 0$ since the Hamiltonian constraint is satisfied by this solution. If we  take as  $K$  another solution of (\ref{3a}) with the same $g$  we can be sure that  the Lichnerowicz  equation can be solved   but there is still  problem to find new $\psi$ numerically.

We are not able to  extend  results of this section to data (\ref{1a}) with $\tilde H\neq const$. Then equation (\ref{3a}) is no longer equivalent to (\ref{1}). Equations (\ref{39b}) and  (\ref{37}) are replaced by 
\ba\label{50a}
(\rho^3W)_{,\varphi}=E-\frac 23\psi^6\rho^3 \tilde H_{,\varphi}
\ea
\be
U_{,\xi}-(\beta U)_{,\varphi}=F+\frac 23\psi^6\rho^2\alpha \bar\p \tilde H\label{50b}
\ee
( note that  $\beta$, $U$, $V$ and $W$ are preserved by  transformation (\ref{1a}) and $\rho\rightarrow\psi^2\rho$, $\alpha\rightarrow\psi^2\alpha$).
Equations (\ref{50a}) and (\ref{50b}) have to be considered simultaneously with (\ref{2a})  except the case $\tilde H_{,\varphi}=0$ for which $W$ is given by (\ref{38a}). Existence of solutions of this system is much more difficult to prove \cite{i,cbiy}.

\section{Horizons}
In the theory of black holes it is important to construct initial data with one or more 2-dimensional surfaces which can represent horizons of black holes. This can be done within the conformal method  by imposing an appropriate condition on the conformal factor $\psi$  on an internal boundary of the initial manifold \cite{ma}. This boundary becomes MOTS  for initial data $(\tilde g,\tilde K)$ given by  (\ref{1a}). Existence of $\psi$ for asymptotically flat data depends  on the positivity of the Yamabe type invariant what is even more difficult to prove than in the case of an unbounded initial manifold (see \cite{tj} for partial results). Another problem is that a continuation of initial data through this internal boundary is not assured. This problem does not appear in the inversion symmetry approach of Misner \cite{m} which is applicable to a  restricted class of data. In this section we  propose a construction of data which follows the latter method.

We would like to  generalize the Kerr metric data at $t=const$, where $t$ is the Boyer-Lindquist time coordinate. The  initial metric induced by the Kerr solution reads 
\be 
g=\rho^2\Delta^{-1}dr^2+\rho^2d\theta^2+\rho^{-2}\Sigma^2\sin^2{\theta}d\varphi^2\ ,\label{48}
\ee
where
\be
\rho^2=r^2+a^2\cos^2{\theta}\ ,\ \ \Delta=r^2-2Mr+a^2\ ,\ \Sigma^2=(r^2+a^2)^2-a^2\Delta\sin^2{\theta}\ .\label{49}
\ee
Metric (\ref{48}) takes the form (\ref{30}) in coordinates $x^i=\tilde r,\theta,\varphi$, where 
 $\tilde r$  is related to $r$ by
\be\label{h24a}
r=M+\sqrt{M^2-a^2}\cosh{\tilde r}\ ,\ \tilde r\in [-\infty,\infty]\ .
\ee
 Domains where $\tilde r>0$ or $\tilde r<0$ are asymptotically flat. They are connected by the external Kerr horizon (the Einstein-Rosen bridge)   located at $\tilde r=0$.
Only non-vanishing components of $K$  are given  by (\ref{50}) with  \cite{tj}
\be
\omega=4aM\rho^{-2}[2(r^2+a^2)+(r^2-a^2)\sin^2{\theta}]\cos{\theta}\ .\label{51}
\ee
The Kerr initial data  are analytic in coordinates $x^i$. Tensors $g$ and $K$ are invariant  under the reflection 
\be
x^1\rightarrow -x^1\label{51a}
\ee
which corresponds to the inversion invariance of Misner. This and vanishing of $K^{11}$ at $x^1=0$ imply that surface $x^1=0$ is MOTS. 

In order to generalize the Kerr data we assume that initial data are given by (\ref{1a}), metric $g$ has the form (\ref{30f}) and
\be\label{51n}
g\rightarrow g
\ee
\be\label{51d}
K\rightarrow \epsilon K\ ,\ \epsilon=\pm 1  
\ee
under reflection (\ref{51a}).
 Transformations (\ref{51n})-(\ref{51d}) are equivalent to the following transformations of components of $g$ and $K$
 \be
\alpha\rightarrow \alpha\ ,\ \rho\rightarrow \rho\ ,\ \beta \rightarrow -\bar\beta\label{51g}
\ee
\be
V\rightarrow -\epsilon \bar V\ ,\ \ W\rightarrow \epsilon W\ ,\ \ U\rightarrow \epsilon \bar U\ .\label{51h}
\ee
\begin{prop}
(i) Equations (\ref{39b})-(\ref{37})  are compatible with transformations (\ref{51a}) and (\ref{51g})-(\ref{51h}). The same is true for formulas  (\ref{38a}), (\ref{45}), (\ref{45c}),  (\ref{45f})-(\ref{44a}) provided that 
\be\label{51e}
\chi\rightarrow \bar \chi\ ,\ f\rightarrow \epsilon f\ ,\ h\rightarrow \epsilon\bar h\ ,\ \omega\rightarrow -\epsilon\omega\ .
\ee
(ii) Let $g$ and $K$ satisfy (\ref{39b})-(\ref{37}), (\ref{51g})-(\ref{51h}) and condition (trivial for  $\epsilon=-1$)
 \be\label{70}
 ReU=-\frac 12\alpha\rho^2 W\ \ at\ \ x^1=0\ .
 \ee
If the Lichnerowicz equation (\ref{2a}) admits a unique solution  $\psi$ then surface $x^1=0$ is MOTS with respect to  initial data $(\tilde g,\tilde K)$  given by (\ref{1a}) with $\tilde H=0$.  
 \end{prop}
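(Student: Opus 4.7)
The plan for part (i) is a direct computational verification: since the reflection $\sigma:x^1\to-x^1$ sends $\xi$ to $-\bar\xi$, the basic operators satisfy $\partial_\xi\to -\partial_{\bar\xi}$, and together with $\beta\to-\bar\beta$ this yields $\partial\to-\bar\partial$ and $\bar\partial\to-\partial$. Checking the definitions (\ref{39c}) and (\ref{39a}) term by term, together with $V\to-\epsilon\bar V$, $W\to\epsilon W$, $U\to\epsilon\bar U$, one finds $E\to\epsilon E$ (because $E$ is already real) and $F\to\epsilon\bar F$. Matching these with the transforms of the left-hand sides of (\ref{39b}) and (\ref{37}) gives compatibility. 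For the Cauchy-type representations (\ref{45}), (\ref{45c}), (\ref{45f})--(\ref{44a}) the essential identities are $d\xi'\wedge d\bar\xi'\to -d\xi'\wedge d\bar\xi'$ and $(\xi'-\xi)^{-1}\to-\overline{(\xi'-\xi)^{-1}}$, which combine with the transformation of $F$ and the prescriptions (\ref{51e}) for $\chi$, $f$, $h$, $\omega$ to produce the required transformation of $U$; for (\ref{45c}), (\ref{45f}) and (\ref{41}) one also uses $\chi\to\bar\chi$ and the resulting swap $\chi_{,\varphi}\to\bar\chi_{,\varphi}$.

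For part (ii) I would first argue that the full physical data $(\tilde g,\tilde K)$ respect the reflection. Uniqueness of $\psi$ in (\ref{2a}) forces $\psi\circ\sigma=\psi$, since the coefficients $R$ and $K_{ij}K^{ij}$ are manifestly $\sigma$-invariant (the squared norm kills the $\epsilon$) and $\tilde H=0$. Hence $\tilde g=\psi^4 g$ is $\sigma$-invariant and $\tilde K=\psi^{-2}K$ transforms with an overall factor $\epsilon$. Next, from $\beta\to-\bar\beta$ one reads off that $\beta_1=\beta+\bar\beta$ is odd in $x^1$, so $\beta_1=0$ on $\Sigma:=\{x^1=0\}$; in view of (\ref{30f}) the line element is then block-diagonal in $x^1$ versus $(x^2,\varphi)$ on $\Sigma$, so the outward unit normal to $\Sigma$ in $(S,\tilde g)$ is aligned with $\p_{x^1}$. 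Since $\sigma$ fixes $\Sigma$ pointwise, reverses this normal and leaves $\tilde g$ invariant, the second fundamental form of $\Sigma$ satisfies $k_{AB}=-k_{AB}$ and therefore vanishes; in particular the mean curvature $k=0$.

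Using the standard outgoing null expansion $\theta_+=k+\tilde H-\tilde K_{ij}n^in^j$ with $\tilde H=0$ and $k=0$, the MOTS condition collapses to $\tilde K_{11}=0$ on $\Sigma$, equivalently $K_{11}=0$. Since $g_{ab}=\rho^2\delta_{ab}$, inverting the definitions (\ref{36}) gives $K_{11}=\frac12\rho^2 W+\alpha^{-1}\operatorname{Re}U$, so $K_{11}|_\Sigma=0$ is precisely (\ref{70}). For $\epsilon=-1$ the symmetries (\ref{51h}) force $U=-\bar U$ and $W=-W$ on the fixed set of $\sigma$, i.e.\ $\operatorname{Re}U=0$ and $W=0$ on $\Sigma$, so both sides of (\ref{70}) vanish automatically, justifying the parenthetical remark.

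The step I expect to be the main obstacle is the uniqueness-of-$\psi$ argument above: the claim that the $\sigma$-invariant solution coincides with the unique $\psi$ implicitly requires that the underlying boundary or asymptotic conditions used to select $\psi$ are themselves $\sigma$-invariant, which is natural in the Misner wormhole picture but should be spelled out. All other verifications, especially the sign bookkeeping in part (i), are tedious but routine once the rules $\partial\leftrightarrow-\bar\partial$ and $d\xi'\wedge d\bar\xi'\to-d\xi'\wedge d\bar\xi'$ are in hand.
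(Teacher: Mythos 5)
Your overall route coincides with the paper's: part (i) by computing the induced transformations of $E$ and $F$ (the paper does exactly this, only more tersely), and part (ii) via invariance of the Lichnerowicz equation plus uniqueness forcing $\psi\circ\sigma=\psi$, vanishing of the extrinsic curvature of the fixed-point surface, and reduction of the MOTS condition to $\tilde K_{ij}n^in^j=0$, which your inversion of (\ref{36}) correctly identifies with (\ref{70}) (a useful detail the paper leaves implicit, as is your observation that $\beta_1=0$ on $x^1=0$ aligns the normal with $\p_1$). Your closing caveat --- that uniqueness only yields a $\sigma$-invariant $\psi$ if the boundary/asymptotic conditions selecting $\psi$ are themselves $\sigma$-invariant --- is a genuine point that the paper's proof also glosses over.

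There is, however, one concrete sign error in part (i): the induced transformation of $F$ is $F\rightarrow-\epsilon\bar F$, not $\epsilon\bar F$. Indeed, with $U\circ\sigma=\epsilon\bar U$ and $\bar\beta\circ\sigma=-\beta$, the chain rule $(\p_{\bar\xi}f)\circ\sigma=-\p_\xi(f\circ\sigma)$ gives
\begin{equation*}
\big(U_{,\bar\xi}-(\bar\beta U)_{,\varphi}\big)\circ\sigma
=-\epsilon\,\overline{U_{,\bar\xi}-(\bar\beta U)_{,\varphi}}\ ,
\end{equation*}
so the left-hand side of (\ref{37}) acquires the factor $-\epsilon$ (after conjugation), and the same factor comes out of each of the four terms of (\ref{39a}); e.g.\ for the first term, $(\p f)\circ\sigma=-\epsilon\,\overline{\p f}$ when $f=\alpha^3W$ satisfies $f\circ\sigma=\epsilon f$, consistent with your own rule $\p\rightarrow-\bar\p$ together with $W\rightarrow\epsilon W$. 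Since both sides of (\ref{37}) transform with the same factor $-\epsilon$, compatibility does hold and your conclusion is unaffected, but as literally stated your bookkeeping ($F\rightarrow\epsilon\bar F$ against a left-hand side going to $-\epsilon$ times its conjugate) would fail to match. Relatedly, in checking (\ref{45}) one must track a third sign from conjugating the prefactor, $\overline{1/(2\pi i)}=-1/(2\pi i)$, which together with your two kernel identities and the corrected law $F\rightarrow-\epsilon\bar F$ produces exactly $U\rightarrow\epsilon\bar U$ with $h\rightarrow\epsilon\bar h$. With these signs repaired the proposal is a correct and somewhat more detailed rendering of the paper's own proof.
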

 \begin{proof} The part (i) of the proposition can be easily proved by considering induced transformations of functions $E$ and $F$ given by (\ref{39c}) and  (\ref{39a}).  The same refers to (\ref{50}) and integral formulas for $W$ and $U$  in section 2 if transformations (\ref{51e}) are taken into account.

  The reflection invariance of $g$  implies that the exterior curvature of  surface $x^1=0$ embedded in the initial manifold  vanishes. For $\epsilon=-1$ function $K^{11}$ is antisymmetric with respect to $x^1$, hence automatically
  \be\label{51b}
K^{ij}n_in_j=0\ \ at\ \ x^1=0\ ,
\ee
where $n_i$ is the normal vector of the surface. For $\epsilon=1$ equality (\ref{51b}) is assured by (\ref{70}). The Lichnerowicz equation is invariant under the transformation given by (\ref{51a})-(\ref{51d}). If its solution $\psi$ is unique it must be also invariant. Then properties (\ref{51d}) and (\ref{51b}) are preserved by the conformal  transformation (\ref{1a}). Hence, the exterior curvature of  surface $x^1=0$ with respect to metric $
\tilde g$ also vanishes. This together with  equation (\ref{51b}) and $\tilde H=0$  imply zeroing of expansions of null rays emitted outward or inward from $x^1=0$. Hence, this surface is  MOTS with respect to data $(g,K)$ and also to data $(\tilde g,\tilde K)$.
 
 \end{proof}
 
 We would like a MOTS to be diffeomorphic to the sphere $S_2$. Following the Kerr data let us assume that $x^1$ plays a role of a radial coordinate and surfaces $x^1=const$ are spheres with the angular coordinates $x^2=\theta$ and $\varphi$. Initial data  should be periodic in $\varphi$ and regular at $\theta=0,\pi$. Let $\rho$, $\alpha$, $\beta_a$ and $K_{3a}$ be invariant under the translation $\varphi\rightarrow\varphi+2\pi$. Then functions $E$ and $F$ in equations  (\ref{39b}) and (\ref{37}) are also invariant. Formula (\ref{38a}) defines periodic $W$ provided that
 \be 
\int_0^{2\pi}{Ed\varphi}=0\ .\label{82}
 \ee
 We can achieve (\ref{82}) in different ways e. g. by shifting one of the variables $\beta_a$, $K_{3a}$ by a $\varphi$-independent function. Alternatively, one can use equation (\ref{39b}) in order to define $\rho$ (or $\alpha$) in terms of periodic functions $\alpha$ (or $\rho$), $\beta_a$, $\alpha K_{3a}$ and $\rho^3W$. In this case we have to care about positivity of the resulting expression. In order to assure   periodicity of  $U$ equation (\ref{37}) should be completed by the condition
 \be
 U(\varphi+2\pi)-U(\varphi)=0 \label{83}
 \ee
 which is compatible with  (\ref{37}).
 Note that all formulas  (\ref{45}), (\ref{45c}), (\ref{45f}), (\ref{41}) define periodic $U$ under suitable assumptions on functions $\chi$ and $h$.
 
 In order to avoid sigularities of metric (\ref{30}) at $\theta=0,\pi$ we assume  that $\rho\neq 0$ everywhere, $\alpha\neq 0$ if $\theta\neq 0,\pi$ and 
 \be
 \alpha\approx\rho\sin{\theta},\ \ \beta_2\approx 0\label{84}
 \ee
 near $\theta=0,\pi$.
 Forms $\sin{\theta}d\theta$, $\sin^2{\theta}d\varphi$ and tensor $d\theta^2+\sin^2{\theta}d\varphi^2$ are nonsingular on $S_2$. In order to assure  regularity of $K$  we assume that near $\theta=0,\pi$  components  $K_{ij}$ of $K$ are proportional to the following powers of $\sin{\theta}$
 \be
 K_{ij}\sim(\sin{\theta})^{i+j-2}\label{85}
 \ee
 or 
 \be\label{86}
 K_{33}\approx K_{22}\sin^2{\theta}
 \ee
 and remaining functions $K_{ij}$ satisfy (\ref{85}). It follows from these conditions that
 \be\label{87}
  ReV\sim\sin^3{\theta}\ ,\ \ ImV\sim\sin^4{\theta},\ \ E\sim\sin^2{\theta}\ ,\ \ ReF\sim\sin{\theta}
  \ee
  near $\theta=0,\pi$
  and  $U$  should behave as
  \be
  ReU\sim\sin{\theta}\ ,\ \ ImU\sim\sin^2{\theta}\ .
 \ee
 Above conditions are compatible with equations (\ref{39b})-(\ref{37}). 
 
 MOTS has a chance to develop into the event horizon if initial data are asymptotically Euclidean. Let $r=ce^{x^1}$, where $c=const$,  approximates  the radial distance  in a flat space if $x^1\rightarrow\infty$.  Standard asymptotical  conditions (see e. g. \cite{t}) are equivalent to 
 \be
 \rho=r+0_2(r^{1-\epsilon})\ ,\ \ \alpha=r\sin{\theta}+0_2(r^{1-\epsilon})\ ,\ \  \beta_a=0_2(r^{-\epsilon})\ , \
K_{ij}=0_1(r^{1-\epsilon})\ .\label{88}
\ee
Here $\epsilon$ is a positive constant and we write $f=0_k(r^l)$ if derivatives of $f$  of order $n\leq k$ fall off as $r^{l-n}$ when $r\rightarrow \infty$.
 
 According to proposition 3.1  one can relatively easily construct initial data with   MOTS assuming  symmetry (\ref{51d}) with $\epsilon=-1$. For instance, let $g$  be the Kerr initial metric (\ref{48}) in coordinates $\tilde r$, $\theta$, $\varphi$ and   only non-vanishing components of  $K$ are given by (\ref{50}) with $\omega$ which is an even function of $x^1$. Since $R\geq 0$ the Lichnerowicz equation with $\tilde H=0$ admits a unique solution $\psi$ which is reflection symmetric. Given $\psi$ transformation (\ref{1a}) yields ultimate  data which satisfies all initial constraints. Note that we can choose $\omega$ which tends to (\ref{51}) if $x^1=\tilde r\rightarrow \infty$. Then the data approaches the Kerr data with the angular momentum $a$ if $x^1\rightarrow \infty$ and with the angular momentum $-a$ if $x^1\rightarrow -\infty$.

 For $\epsilon=1$  condition (\ref{70}) imposes an inconvenient constraint on integral representations of $W$ and $U$ from section 2. Function $f$ in (\ref{38a}) cannot solve this constraint because $f$ is independent of $\varphi$. Function $h$ in integral formulas for $U$ can be used only for analytic fields. 
 A radical way to avoid  problem with (\ref{70}) is to assume  $W=ReU=\beta=0$. Then equation (\ref{39b}) yields $V=2i\omega_{,\xi}$ and (\ref{37}) is equivalent to the following  conjugate equations for $ImU$
 \ba\label{77}
 (ImU)_{,\bar\xi}=-(\rho^2\alpha^{-2}\omega_{,\xi})_{,\varphi}\ ,\ \ (ImU)_{,\xi}=-(\rho^2\alpha^{-2}\omega_{,\bar\xi})_{,\varphi}\ .
 \ea
The integrability condition for this system reads
\be\label{78}
(\rho^2\alpha^{-2}\omega_{,1})_{,2}+(\rho^2\alpha^{-2}\omega_{,2})_{,1}=f_{,2}\ ,
\ee
where derivatives are taken with respect to real coordinates $x^a$ and $f$ is a $\varphi$-independent  function. One can consider (\ref{78}) as an equation for $\omega$, in which $\varphi$ is a parameter, or one can formally solve (\ref{78}) to obtain
\be\label{79}
\rho^2\alpha^{-2}\omega_{,1}-f=\gamma_{,1}\ ,\ \ \rho^2\alpha^{-2}\omega_{,1}=-\gamma_{,2}\ ,
\ee
where $\gamma$ is a new function. Let us treat (\ref{79}) as equations for $\omega$ and $\rho^2\alpha^{-2}$ whereas $\gamma$ and $f$ are given. Then $\omega$ has to satisfy the first order linear equation
\be\label{80}
\gamma_{,2}\omega_{,1}+(\gamma_{,1}+f)\omega_{,2}=0\ .
\ee
If it is solved function $\rho^2\alpha^{-2}$ is given by 
\be\label{81}
\frac{\rho^2}{\alpha^{2}}=\frac{(\gamma_{,1}+f)\omega_{,1}-\gamma_{,2}\omega_{,2}}{\omega_{,1}\omega_{,1}+\omega_{,2}\omega_{,2}}
\ee
provided that the nominator and the denominator of the r. h. s. of (\ref{81}) are both positive. Solving (\ref{80}) with respect to $\omega$ is equivalent to finding integral lines of the vector field $v=\gamma_{,2}\p_1+(\gamma_{,1}+f)\p_2$ in $R^2$. The latter problem can be reduced to an ordinary differential equation with an arbitrary initial condition.
 \section{Summary}
We have been considering  initial constraints for the vacuum Einstein equations in the framework of the conformal approach.  We assumed that  metric can be put into  form (\ref{30f}). One can relate with  coordinates $\xi,\bar\xi,\varphi$  the Cauchy-Riemann operator $\p$.   The momentum constraint with $\tilde H=const$ splits into equations  (\ref{39b}) and (\ref{37}) (proposition 2.1). Equation (\ref{39b}) can be directly integrated with respect to $W$ giving formula (\ref{38a}). In the case of fields analytic in coordinate $\varphi$ or data with $\beta=0$  solutions of equation (\ref{37})  can be also written as    integrals of  known functions (propositions 2.2 and 2.3). In order to complete the construction of initial  data one has still  to solve  the Lichnerowicz equation (\ref{2a}) for the conformal factor $\psi$. Its existence and uniqueness  can be  easily proved in some cases, e.g. if data $(g,K)$ are asymptotically flat, $\tilde H=0$ and the 
Ricci scalar of $g$ is nonnegative.

In section 3 we propose a construction of maximal data with a reflection symmetry which, together with (\ref{70}), implies  existence of a horizon in the form of a marginally outer trapped surface (proposition 3.1).  As an example we present data  obtained by a modification of  the Kerr initial data.


\begin{thebibliography}{99}
\bibitem{cy}
Choquet-Bruhat Y.: \textit{General Relativity and Einstein's Equations}, Oxford University Press (2009)

\bibitem{c}
Cook G. B.: Initial Data for Numerical Relativity, \textit{Living Reviews in Relativity} \textbf{3} (2000), 5 




\bibitem{i}
Isenberg J.: The Initial Value Problem in General Relativity, in \textit{Springer Handbook of Spacetime}, eds Ashtekar A. and
        Petkov V., Springer Berlin Heidelberg (2014)



\bibitem{bl} 
Brill D. R. and Lindquist R. W.: 
Interaction energy in geometromechanics,
\textit{Phys. Rev.} \textbf{131}, 471-476 (1963)

\bibitem{m}
Misner C. W.: The Method of Images in Geometrostatics, \textit{Ann. Phys.} \textbf{24}, 102-117 (1963)

\bibitem{by}
Bowen J. M. and York Jr. J. W.:  Time-asymmetric initial data for black holes and black hole collisions, \textit{Phys. Rev}. D \textbf{21}, 2047-2055 (1980)

\bibitem{bb}
Brandt S. R. and Br\"{u}gmann B.: 
A simple construction of initial data for multiple black holes,
\textit{Phys. Rev. Lett.} \textbf{78}, 3606-3609 (1997)






\bibitem{bs}
Brandt S. R. and Seidel E.: Evolution of distorted rotating black holes. III: initial data, \textit{Phys. Rev.} \textbf{D54}, 1403-1416 (1996)

\bibitem{bp} 
Baker J. and  Puzio R.:  New Method for Solving the Initial Value Problem with Application to
Multiple Black Holes, \textit{Phys. Rev.} D \textbf{59}, 04 4030 (1999)

\bibitem{d}
Dain S.: Initial Data for a Head-On Collision of Two Kerr-like Black Holes with Close Limit,
\textit{Phys. Rev.} D \textbf{64}, 12 4002 (2001)


\bibitem{cm} 
Conboye R. and \'{O} Murchadha N.:
Potentials for transverse trace-free tensors,  \textit{Class. Quantum Grav.} \textbf{31}, 085019 (2014)

\bibitem{tj} 
Tafel J. and  J\'{o}\'{z}wikowski M.: New solutions of initial conditions in general relativity,
\textit{Class. Quantum Grav.} \textbf{31}, 115001 (2014)


\bibitem{cb} Cantor M. and Brill D. R.: 
The Laplacian on asymptotically flat manifolds and the specification of scalar curvature,
\textit{Compos. Math.} \textbf{43},  317--325 (1981)


\bibitem{d1} Dain S.: Trapped surfaces as boundaries for the constraint equations, \textit{Class. Quantum Grav.} \textbf{21}, 555-571 (2004)

\bibitem{ma}
Maxwell D.:  Solutions of the Einstein constraint equations with apparent horizon boundaries,
\textit{Commun. Math. Phys.} \textbf{253}, 561-583 (2005)

\bibitem{ma1}
 Maxwell D.: Rough solutions of the Einstein constraint equations, \textit{J. Reine Ang. Math.} \textbf{590},  1-30  (2006)

\bibitem{al}
Alcubierre M.: \textit{Introduction to 3 + 1 Numerical Relativity}, Oxford Science Publications (2008)


\bibitem{b}
Baumgarte T. and Shapiro S.: \textit{Numerical Relativity}, Cambridge University Press (2010)


\bibitem{cbiy}
Choquet-Bruhat Y., Isenberg J. and York Jr. J. W.: Einstein constraints on asymptotically Euclidean manifolds, \textit{Phys. Rev}. \textbf{D 61 }, 084034 (2000)

\bibitem{s}
Smarr L.: Gauge condition, radiation formulae and the two black hole collision,  in Smarr L (ed.) \textit{Sources of Gravitational Radiation}, Cambridge University Press (1979)


\bibitem{t}
Thornburg J.: Coordinates and boundary conditions for the general relativistic
initial data problem, \textit{Class. Quantum Grav.} \textbf{4}, 1119-1131 (1987)


\bibitem{bo}
Boggess A.: \textit{CR Manifolds and the Tangential Cauchy-Riemann complex}, Advanced studies in Mathematics, CRC Press (1991)

\bibitem{l}
Lewy H.: An example of a smooth linear partial differential equation without solution, \textit{Ann. Math.} \textbf{66}, 155-158 (1957)


\bibitem{lnt}
 Lewandowski J., Nurowski P. and  Tafel J.: Algebraically special solutions of the Einstein equations with pure radiation fields, \textit{Class. Quantum Grav.} \textbf{8}, 493-501 (1991)
 
\bibitem{h}
H\"{o}rmander L.: \textit{An introduction to complex analysis in several variables} (3rd edition), North Holland, Elsevier Science Publishers B.V. (1990)



\end{thebibliography}
\end{document}